\colorlet{shadecolor}{yellow}
\newtheorem{theorem}{Theorem}
\theoremstyle{definition}
\newtheorem{example}{Example}
\begin{document}
\bstctlcite{IEEEexample:BSTcontrol}
    \title{Versatile Time-Frequency Representations Realized by Convex Penalty on Magnitude Spectrogram}
    \author{Keidai~Arai,~Koki~Yamada,~\IEEEmembership{Member,~IEEE},~and~Kohei~Yatabe,~\IEEEmembership{Member,~IEEE}
    \thanks{Manuscript received May XX, 2023; revised XXXXXX XX, 2023; accepted XXXXXX XX, 2023. Date of publication XXXXXX XX, 2023; date of current version XXXXXX XX, 2023. The associate editor coordinating the review of this manuscript and approving it for publication was Dr. XXXXXX XXXXXX. (Corresponding author: Keidai Arai.)}%
    \thanks{The authors are with Tokyo University of Agriculture and Technology, Tokyo 184-8588, Japan (e-mail: email@email.com; email@email.com; yatabe@go.tuat.ac.jp).}%
    \thanks{Digital Object Identifier 10.1109/LSP.2023.XXXXXXX}}

\markboth{IEEE SIGNAL PROCESSING LETTER}{Arai \MakeLowercase{\textit{et al.}}: Versatile Time-Frequency Representations}

\maketitle

\begin{abstract}
Sparse time-frequency (T-F) representations have been an important research topic for more than several decades.
Among them, optimization-based methods (in particular, extensions of basis pursuit) allow us to design the representations through objective functions.
Since acoustic signal processing utilizes models of spectrogram, the flexibility of optimization-based T-F representations is helpful for adjusting the representation for each application.
However, acoustic applications often require models of \textit{magnitude} of T-F representations obtained by discrete Gabor transform (DGT).
Adjusting a T-F representation to such a magnitude model (e.g., smoothness of magnitude of DGT coefficients) results in a non-convex optimization problem that is difficult to solve.
In this paper, instead of tackling difficult non-convex problems, we propose a convex optimization-based framework that realizes a T-F representation whose magnitude has characteristics specified by the user.
We analyzed the properties of the proposed method and provide numerical examples of sparse T-F representations having, e.g., low-rank or smooth magnitude, which have not been realized before.
\end{abstract}

\begin{IEEEkeywords}
Sparse time-frequency analysis, basis pursuit, perspective function, convex optimization, primal-dual splitting.
\end{IEEEkeywords}

\IEEEpeerreviewmaketitle

\section{Introduction}

\IEEEPARstart{T}{ime-frequency} (T-F) analysis is an essential tool in science and engineering \cite{DGT, DGT2}. 
The topic of this paper can include any complex-valued T-F analysis, but we focus on the short-time Fourier transform (STFT), or discrete Gabor transform (DGT), for brevity.
Over several decades, sparse T-F analysis has been an important research topic for breaking the barrier of the uncertainty principle.
For example, reassignment and synchrosqueezing have been applied to STFT/DGT for computing sharper spectrograms \cite{reassignment, synchrosqueezing, synchrosqueezing2, reassignment_synchrosqueezing, reassignment_synchrosqueezing_problem}.

Optimization-based sparse T-F analysis has offered flexibility for designing T-F representation.
Thanks to the redundancy of DGT, T-F representation can be customized by formulating  an optimization problem and solving it.
The obtained representation has properties imposed by the penalty function defined in the optimization problem.
For example, the $\ell_1$-norm used in the basis pursuit problem enhances sparsity \cite{basis_pursuit}, and the mixed-norm promotes structured sparsity determined by its definition \cite{KowalskiMixedNorm2009}.
There are many other penalty functions that can be used for designing T-F representations \cite{basis_pursuit_tf, basis_pursuit_tf2, enhancing_sparsity, IRLS, elastic_net, L1-2_compress, atomic_norm, block_sparsity2, block_sparsity3, block_sparsity, block_sparsity4, LOP, social_sparsity, Bach2012Sparse}.
By choosing an appropriate penalty function, one can adjust a T-F representation to its application.

However, practical applications in acoustics often assume \textit{magnitude} of DGT coefficients to have specific characteristics.
For example, low-rankness of spectrogram is often assumed, which resulted in popularity of applying nonnegative matrix factorization (NMF) to spectrogram~\cite{NMF_Bayesian, NMF_multichannel, ILRMA}.
Smoothness of spectrogram has also be utilized~\cite{OnoHPSS, SmoothSpecNMF, PhaseAwareHPSS}.
To impose these properties on T-F representations, naive formulation requires a penalty function that handles magnitude of complex numbers, which easily results in a difficult non-convex optimization problem.
For example, spectrogram smoothness requires to penalize difference of magnitude, which leads to a composition of non-linear transform (i.e., absolute value), linear transform (i.e., difference operator) and a norm.
Such composition including absolute value usually results in a non-convex penalty function.
This difficulty has obstructed optimization-based T-F analysis to be applied in practice.

To resolve this issue, we propose a convex optimization-based framework that can penalize magnitude of T-F representations.
We introduce a nonnegative auxiliary variable related to magnitude of the T-F representation, and a penalty function is applied to it.
This auxiliary variable is combined with the T-F representation using a perspective function~\cite{perspective, perspective2, perspective3, perspective4} so that the overall optimization problem is convex whenever the penalty function is convex.
Our contributions in this paper can be summarized as follows:
(i) formulating a novel convex optimization problem for sparse T-F analysis; (ii) discussing the properties of the proposed optimization problem; (iii) deriving a primal-dual algorithm; and (iv) providing numerical examples.
The proposed framework realizes some completely new T-F representations that have not been available before.

\textbf{Notations.} \;
$\mathbb{N}$, $\mathbb{R}$, $\mathbb{R}_+$ and $\mathbb{C}$ denote the sets of all positive integers, real numbers, nonnegative numbers, and complex numbers, respectively.
$\overline{(\cdot)}$, $(\cdot)^\mathrm{T}$, $(\cdot)^\mathrm{H}$, $|\cdot|$ and $\odot$ denote the complex conjugate, transpose, Hermitian transpose, entry-wise absolute value, and entry-wise multiplication, respectively.
The $\ell_1$- and $\ell_2$-norms are $\|\boldsymbol{x}\|_1 = \sum_{n=1}^N|x_n|$ and $\|\boldsymbol{x}\|_2 = \sqrt{\boldsymbol{x}^\mathrm{H}\boldsymbol{x}}$, respectively.
The nuclear norm $\|\cdot\|_\ast$ is the $\ell_1$-norm of singular values, and $\|\cdot\|_\mathrm{op}$ is the operator norm.
The set of all proper lower semicontinuous convex functions is denoted by $\Gamma_0(\mathbb{R}^N)$.
The proximity operator of a function $f\in\Gamma_0(\mathbb{R}^N)$ is denoted by $\mathrm{prox}_{f}(\boldsymbol{x}) = \arg\min_{\boldsymbol{\xi}\in\mathbb{R}^N}\,f(\boldsymbol{\xi}) + {\textstyle\frac{1}{2}}\left\|\boldsymbol{x}-\boldsymbol{\xi}\right\|_2^2$.

\section{Preliminaries}

\subsection{Discrete Gabor Transform and Basis Pursuit}

Let DGT of $\boldsymbol{d}\in\mathbb{C}^L$ with respect to $\boldsymbol{w}\in\mathbb{C}^L$ be defined as
\begin{align}\label{eq:dgt}
    x_{m,n} = \displaystyle\sum_{l=0}^{L-1} d_{l}\,\overline{w_{l-an}} \,\mathrm{e}^{-2\pi \mathrm{i} ml/M},
\end{align}
where $n = 0, \dotsc, N-1$ and $m = 0, \dotsc, M-1$ are the time and frequency indices, respectively, $N = L/a$ and $M \in \mathbb{N}$ are the numbers of time frames and frequency bins, respectively, and $a \in \mathbb{N}$ is the time-shifting width.
The signal length $L$ is assumed to satisfy $N = L/a \in \mathbb{N}$ and $MN > L$.
Eq.~\eqref{eq:dgt} can be shortly written using the matrix $\boldsymbol{G}_w \in \mathbb{C}^{MN \times L}$ as
\begin{align}\label{eq:dgt_matrix}
\boldsymbol{x} = \boldsymbol{G}_w\boldsymbol{d},
\end{align}
where $(G_w)_{m+nM,l} = \overline{w_{l-an}} \,\mathrm{e}^{-2\pi \mathrm{i} ml/M}$.
If $\boldsymbol{G}_w^\mathrm{H}\boldsymbol{G}_w$ is invertible, there exists the canonical dual window of $\boldsymbol{w}$ given by
\begin{equation}\label{eq:dual_window}
\boldsymbol{\gamma}^\star = (\boldsymbol{G}_w^\mathrm{H} \boldsymbol{G}_w)^{-1} \boldsymbol{w},
\end{equation}
which admits the following important identity:
\begin{equation}\label{eq:idgt}
\boldsymbol{G}_{\gamma^\star}^\mathrm{H} \boldsymbol{G}_w = \boldsymbol{G}_w^\mathrm{H} \boldsymbol{G}_{\gamma^\star} = \boldsymbol{I},
\end{equation}
where $\boldsymbol{I}$ denotes the identity matrix.

As $MN>L$, redundancy of DGT can be used for adjusting a T-F representation.
For example, solving the following basis pursuit problem gives a sparse T-F representation \cite{basis_pursuit, basis_pursuit_tf}:
\begin{align}\label{eq:basis_pursuit}
    \min_{\boldsymbol{x}\in\mathbb{C}^{MN}} \,\left\|\boldsymbol{x}\right\|_1 \,\; \mathrm{s.t.} \;\; \boldsymbol{G}_{\gamma^\star}^\mathrm{H}\boldsymbol{x} = \boldsymbol{d}.
\end{align}
The $\ell_1$-norm is the most standard convex penalty function for inducing sparsity.
Using other penalty functions in place of the $\ell_1$-norm results in different T-F representations.

\subsection{Structured Penalty Functions}

Since the $\ell_1$-norm is not the best choice, many other penalty functions have been proposed such as non-convex penalty functions \cite{GISA, Chartrand2014ShrinkageMap, IvanBayram2014MaximallySparse, Ivan2017Convex} and structured penalty functions \cite{block_sparsity, block_sparsity2, block_sparsity3, block_sparsity4, LOP, social_sparsity, Bach2012Sparse}.
We do not consider non-convex penalty functions in this paper because they result in non-convex optimization problems which are difficult to solve globally.
Structured penalty functions have flexibility for incorporating some knowledge on structure of data (e.g., grouped or tree structure).
However, they cannot handle some structures, e.g., those determined by difference between the magnitude of T-F bins.
Moreover, it is difficult to handle non-local structures, and hence most structured penalty functions focus on local relationship.
The proposed framework aims to overcome these limitations.

The most important for interpreting our proposal but often unnoticed alternative for structured optimization is weighted norms \cite{enhancing_sparsity,IRLS}.
The weighted $\ell_2$- and $\ell_1$-norms can be defined as $\sqrt{\boldsymbol{x}^\mathrm{H}\boldsymbol{\Sigma}^{-1}\boldsymbol{x}}$ and $\|\boldsymbol{\Sigma}^{-1}\boldsymbol{x}\|_1$, respectively, where the weight is represented by $\boldsymbol{\sigma}>\boldsymbol{0}$ and $\boldsymbol{\Sigma} = \mathrm{diag(\boldsymbol{\sigma})}$.
If this weight is specifically designed according to prior knowledge, the weighted norms induce the property determined by the weight.
For example, if one knows which entries of the solution to be small, then setting large weights to those entries results in a solution satisfying the prior knowledge.

\section{Proposed Method}

In this section, we propose a novel framework for realizing sparse T-F representations having desired magnitude.
As the proposed method relies on a perspective function~\cite{perspective, perspective2, perspective3, perspective4}, it is briefly reviewed before introducing the proposed method.

\subsection{Perspective Function for Optimizing Weighted Norm}

The proposed method relies on the following convex function $\varphi$ defined for a pair $(\boldsymbol{x}, \boldsymbol{\sigma})\in\mathbb{C}^{MN}\times\mathbb{R}^{MN}$ as follows:
\begin{align}\label{eq:sum_perspective}
    \varphi(\boldsymbol{x}, \boldsymbol{\sigma}) = \sum_{k=1}^{MN}\phi(x_k, \sigma_k),
\end{align}
where $\phi : \mathbb{C}\times\mathbb{R} \rightarrow \mathbb{R}_+ \cup \{\infty\}$ is given by
\begin{equation}\label{eq:perspective}
\phi(x_k, \sigma_k) =
\left\{
    \begin{array}{cl}
    \frac{|x_k|^2}{2\sigma_k} + \frac{\sigma_k}{2} & (\sigma_k > 0),\\
    0 & (x_k = 0 \:\:\mathrm{and}\:\: \sigma_k = 0),\\
    \infty & (\mathrm{otherwise}).
    \end{array}
\right.
\end{equation}
This is the perspective function of $(|\cdot|^2/2) + (1/2)$ and hence a proper lower semicontinuous convex function \cite{perspective, perspective3}.

If $\boldsymbol{\sigma}>\boldsymbol{0}$, then $2\,\varphi$ can be viewed as the squared weighted $\ell_2$-norm for $\boldsymbol{x}$, i.e., $\boldsymbol{x}^\mathrm{H}\boldsymbol{\Sigma}^{-1}\boldsymbol{x} + \|\boldsymbol{\sigma}\|_1$ with $\boldsymbol{\Sigma}=\mathrm{diag}(\boldsymbol{\sigma})$.
Minimizing $\varphi$ can simultaneously penalize $\boldsymbol{x}$ and optimize $\boldsymbol{\sigma}$, and hence $\varphi$ can be interpreted as a squared weighted $\ell_2$-norm with an adaptive weight.
By properly modifying the weight $\boldsymbol{\sigma}$, a desired structure can be imposed on $\boldsymbol{x}$ through $\varphi$.

\subsection{Convex Penalty on Magnitude of DGT Coefficients} \label{sec:proveMagnitude}

To modify the weight $\boldsymbol{\sigma}$, we introduce a penalty function $\Psi\in\Gamma_0(\mathbb{R}^{MN})$.
By replacing the $\ell_1$-norm of the basis pursuit problem in Eq.~\eqref{eq:basis_pursuit} with $\varphi$ and $\Psi$, we obtain the proposed convex optimization problem for sparse T-F representation:
\begin{align}\label{eq:proposed_formulation}
    \min_{(\boldsymbol{x}, \boldsymbol{\sigma}) \in \mathbb{C}^{MN} \times \mathbb{R}^{MN}} \,\varphi(\boldsymbol{x}, \boldsymbol{\sigma}) + \Psi(\boldsymbol{\sigma}) \;\; \mathrm{s.t.} \;\; \boldsymbol{G}_{\gamma^\star}^\mathrm{H}\boldsymbol{x} = \boldsymbol{d}.
\end{align}
This formulation allows us to design $\Psi$ that imposes a structure on the weight $\boldsymbol{\sigma}$, which is transferred to the DGT coefficients $\boldsymbol{x}$ via the weighted $\ell_2$-norm inside $\varphi$.

At first glance, it might be unclear how $\Psi$ affects $\boldsymbol{x}$ because of the indirect formulation.
Here, we show that $\boldsymbol{\sigma}$ is actually related to the magnitude of DGT coefficients $|\boldsymbol{x}|$.
According to the following result (and examples in Section~\ref{sec:exp}), we regard $\Psi(\boldsymbol{\sigma})$ as an indirect penalty function for $|\boldsymbol{x}|$.
\begin{theorem}
    For each $\boldsymbol{x}$, let $\boldsymbol{\sigma}^\star_{\!\boldsymbol{x}}$ be a minimizer of $\varphi(\boldsymbol{x}, \boldsymbol{\sigma}) + \Psi(\boldsymbol{\sigma})$.
    If $\Psi = 0$, then $\boldsymbol{\sigma}^\star_{\!\boldsymbol{x}} = |\boldsymbol{x}|$.
    If $\Psi \neq 0$ and $\boldsymbol{\sigma}^\star_{\!\boldsymbol{x}} = |\boldsymbol{x}|$, then $\boldsymbol{\sigma}^\star_{\!\boldsymbol{x}}$ minimizes $\Psi(\boldsymbol{\sigma})$.
\end{theorem}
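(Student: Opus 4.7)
The plan is to prove the two assertions separately.

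For the first ($\Psi = 0$), the problem reduces to minimizing the separable function $\varphi(\boldsymbol{x}, \boldsymbol{\sigma}) = \sum_k \phi(x_k, \sigma_k)$ coordinatewise. For each index $k$ I would split on the two regimes in the definition of $\phi$: if $x_k = 0$, then $\phi(0, \sigma_k) = \sigma_k/2$ on $[0,\infty)$ (and $+\infty$ elsewhere), which is minimized at $\sigma_k = 0 = |x_k|$; if $x_k \neq 0$, applying AM--GM to $|x_k|^2/(2\sigma_k) + \sigma_k/2$ yields the sharp lower bound $|x_k|$, attained uniquely at $\sigma_k = |x_k|$. Collecting coordinates gives $\boldsymbol{\sigma}^\star_{\boldsymbol{x}} = |\boldsymbol{x}|$.

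For the second, I would invoke first-order optimality at the hypothesised minimizer $\boldsymbol{\sigma}^\star_{\boldsymbol{x}} = |\boldsymbol{x}|$ of $\varphi(\boldsymbol{x}, \cdot) + \Psi$. By the subdifferential sum rule (applicable since both functions lie in $\Gamma_0(\mathbb{R}^{MN})$),
\[
\boldsymbol{0} \in \partial \varphi(\boldsymbol{x}, |\boldsymbol{x}|) + \partial \Psi(|\boldsymbol{x}|),
\]
where $\partial$ denotes the subdifferential with respect to $\boldsymbol{\sigma}$. The derivative computation underlying the first part shows that $\phi(x_k, \cdot)$ is differentiable at $\sigma_k = |x_k|$ with zero derivative whenever $x_k \neq 0$ (one checks $-|x_k|^2/(2|x_k|^2) + 1/2 = 0$), so at those coordinates the $\varphi$-part of the subdifferential is the singleton $\{0\}$. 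This forces $\boldsymbol{0} \in \partial \Psi(|\boldsymbol{x}|)$, which by convexity of $\Psi$ is equivalent to $|\boldsymbol{x}|$ being a minimizer of $\Psi$.

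The step I expect to be the main obstacle is handling indices at which $x_k = 0$. There $\sigma_k^\star = 0$ is a boundary point of the effective domain of $\phi(0, \cdot)$, and its partial subdifferential is the half-line $(-\infty, 1/2]$ rather than $\{0\}$. I plan to recast optimality as a directional inequality along feasible (nonnegative) perturbations, using the explicit right-slope $1/2$ of $\phi(0, \cdot)$ at $0$, to argue that no admissible direction can strictly decrease $\Psi$ from $\Psi(|\boldsymbol{x}|)$. Making this boundary analysis precise is the only technical wrinkle of the proof.
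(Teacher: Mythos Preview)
Your argument follows the same line as the paper's: write the first-order optimality condition in $\boldsymbol{\sigma}$, observe that the $\varphi$-contribution vanishes at $\boldsymbol{\sigma}=|\boldsymbol{x}|$, and read off $\boldsymbol{0}\in\partial\Psi(|\boldsymbol{x}|)$. Your AM--GM treatment of the first assertion is slightly more elementary than the paper's derivative computation and, unlike the paper, handles the case $x_k=0$ explicitly.

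For the second assertion you have correctly located a gap that the paper simply glosses over. The paper writes the stationarity condition
\[
\boldsymbol{0}\in \frac{|\boldsymbol{x}|^2}{2}\odot\Bigl(-\frac{1}{(\boldsymbol{\sigma}^\star_{\boldsymbol{x}})^2}\Bigr)+\frac{1}{2}+\partial\Psi(\boldsymbol{\sigma}^\star_{\boldsymbol{x}}),
\]
which tacitly assumes every coordinate of $\boldsymbol{\sigma}^\star_{\boldsymbol{x}}$ is strictly positive. Your directional-inequality plan, however, will not close the gap: at an index with $x_k=0$ one has $\partial_{\sigma_k}\phi(0,\cdot)(0)=(-\infty,\tfrac{1}{2}]$, so optimality of $\varphi+\Psi$ only forces the $k$-th component of some subgradient of $\Psi$ at $|\boldsymbol{x}|$ to lie in $[-\tfrac{1}{2},\infty)$, not to vanish. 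In fact the conclusion can fail at such coordinates: with $MN=1$, $x=0$, and $\Psi(\sigma)=-\sigma/4+\iota_{[0,1]}(\sigma)\in\Gamma_0(\mathbb{R})$, the sum $\phi(0,\sigma)+\Psi(\sigma)=\sigma/4$ on $[0,1]$ is minimized at $\sigma^\star=0=|x|$, yet $\Psi$ is minimized at $\sigma=1$. Thus the statement (and both proofs) implicitly require $|\boldsymbol{x}|>\boldsymbol{0}$ entrywise; under that hypothesis your argument and the paper's coincide and are complete.
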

\begin{proof}
    Since $\boldsymbol{\sigma}^\star_{\!\boldsymbol{x}}$ minimizes $\varphi(\boldsymbol{x} , \boldsymbol{\sigma}) + \Psi(\boldsymbol{\sigma})$ for a fixed $\boldsymbol{x}$, it satisfies the following optimality condition:
    \begin{equation}\label{eq:kkt_sigma}
        \boldsymbol{0} \in \frac{|\boldsymbol{x}|^2}{2} \odot \left(-\frac{1}{(\boldsymbol{\sigma}^\star_{\!\boldsymbol{x}})^2}\right) + \frac{1}{2} + \partial\Psi(\boldsymbol{\sigma}^\star_{\!\boldsymbol{x}}).
    \end{equation}
    If $\Psi=0$, then $\partial\Psi = \{\boldsymbol{0}\}$ and hence $\boldsymbol{\sigma}^\star_{\!\boldsymbol{x}} = |\boldsymbol{x}|$.
    If $\Psi\neq0$, substituting $\boldsymbol{\sigma}^\star_{\!\boldsymbol{x}} = |\boldsymbol{x}|$ into Eq.~\eqref{eq:kkt_sigma} gives $\boldsymbol{0} \in \partial\Psi(|\boldsymbol{x}|)$, and hence $\boldsymbol{\sigma}^\star_{\!\boldsymbol{x}} = |\boldsymbol{x}|$ minimizes $\Psi(\boldsymbol{\sigma})$.
\end{proof}

\subsection{Primal-Dual Algorithm for the Proposed Framework}

Consider the following specific form of Problem \eqref{eq:proposed_formulation}:
\begin{align}\label{eq:proposed_reformulation}
    \min_{(\boldsymbol{x}, \boldsymbol{\sigma}) \in \mathbb{C}^{MN} \times \mathbb{R}^{MN}} \varphi(\boldsymbol{x}, \boldsymbol{\sigma}) + \lambda\psi(\boldsymbol{B\sigma}) + \iota_\mathcal{C}(\boldsymbol{x}),
\end{align}
where $\boldsymbol{B}\in\mathbb{C}^{J\times MN}$, $\psi \in \Gamma_0(\mathbb{R}^J)$, $\lambda > 0$, $\iota_\mathcal{C}(\boldsymbol{x})$ is the indicator function of $\mathcal{C}$ (i.e., $\iota_\mathcal{C}(\boldsymbol{x})=0$ if $\boldsymbol{x}\in\mathcal{C}$, and $\iota_\mathcal{C}(\boldsymbol{x})=\infty$ otherwise), and $\mathcal{C} = \{\boldsymbol{x} \in \mathbb{C}^{MN} \mid \boldsymbol{G}_{\gamma^\star}^\mathrm{H}\boldsymbol{x} = \boldsymbol{d}\}$.
Let us provide some examples of the penalty function $\psi\circ\boldsymbol{B}$.
\begin{example}\label{example}
    Structures induced by $\psi\circ\boldsymbol{B}$ in Problem \eqref{eq:proposed_reformulation}.
    \begin{enumerate}
    \renewcommand{\theenumi}{(\roman{enumi}}
        \item \textbf{Sparsity:}
        Setting $\psi = \|\cdot\|_1$ and $\boldsymbol{B}=\boldsymbol{I}$ induces sparsity of $|\boldsymbol{x}|$.
        Note that $\psi = 0$ also induces sparsity because $\|\boldsymbol{\sigma}\|_1$ is included in the definition of $\varphi(\boldsymbol{x},\boldsymbol{\sigma})$ in Eq.~\eqref{eq:sum_perspective}.
        \item \textbf{Low-rankness:}
        Setting $\psi = \|\cdot\|_\ast$ and $\boldsymbol{B} = \boldsymbol{I}$ induces low-rankness of $|\boldsymbol{x}|$ \cite{nuclear_norm}, where the nuclear norm treats $\boldsymbol{\sigma}\in\mathbb{R}^{MN}$ as an $M\times N$ matrix.
        Note that the nuclear norm cannot be directly applied to $\boldsymbol{x}$ because $\boldsymbol{x}$ cannot be low-rank in the complex-valued sense \cite{Masuyama2019LowRank}.
        
        \item \textbf{Total variation:}
        Setting $\psi = \|\cdot\|_{2,1}$ and $\boldsymbol{B} = \boldsymbol{D}$ induces smoothness of $|\boldsymbol{x}|$ \cite{total_variation}, where the difference matrix $\boldsymbol{D}$ approximates the gradient at each $M\times N$ entry, and $\|\cdot\|_{2,1}$ penalizes sum of magnitude of the gradients.
        \item \textbf{Harmonic enhancement:} Setting $\psi = \|\cdot\|_{2,1}$ and $\boldsymbol{B} = \boldsymbol{C}\boldsymbol{D}$ enhances the harmonic structure of $|\boldsymbol{x}|$ \cite{harmonic_enhancement}, where $\boldsymbol{C}$ denotes the discrete cosine transform (DCT) along the frequency axis. Making DCT coefficients sparse emphasizes periodic patterns of $|\boldsymbol{x}|$.
    \end{enumerate}
\end{example}

Let $\boldsymbol{L} = [ [\boldsymbol{I}\;\boldsymbol{O}]^\mathrm{T},[\boldsymbol{O}\;\boldsymbol{B}]^\mathrm{T} ]$, $\boldsymbol{y} =[ \boldsymbol{x}^\mathrm{T}, \boldsymbol{\sigma}^\mathrm{T} ]^\mathrm{T}$, $f(\boldsymbol{y}) = \varphi(\boldsymbol{x}, \boldsymbol{\sigma})$, and $g(\boldsymbol{Ly}) = \iota_\mathcal{C}(\boldsymbol{x}) + \lambda\psi(\boldsymbol{B\sigma})$.
Then, Problem \eqref{eq:proposed_reformulation} can be rewritten as $\min_{\boldsymbol{y}}\, f(\boldsymbol{y}) + g(\boldsymbol{Ly})$.
Directly applying the well-known Chambolle--Pock algorithm \cite{ChambollePock,relax_them_all} to this problem provides \mbox{\textbf{Algorithm \ref{alg:prop}}}, where the two proximity operators, $\mathrm{prox}_{\tau\varphi}$ and $P_\mathcal{C}$, are given as follows.

\begin{figure*}
    \centering
    \vspace{-15pt}
    \includegraphics[scale=0.785]{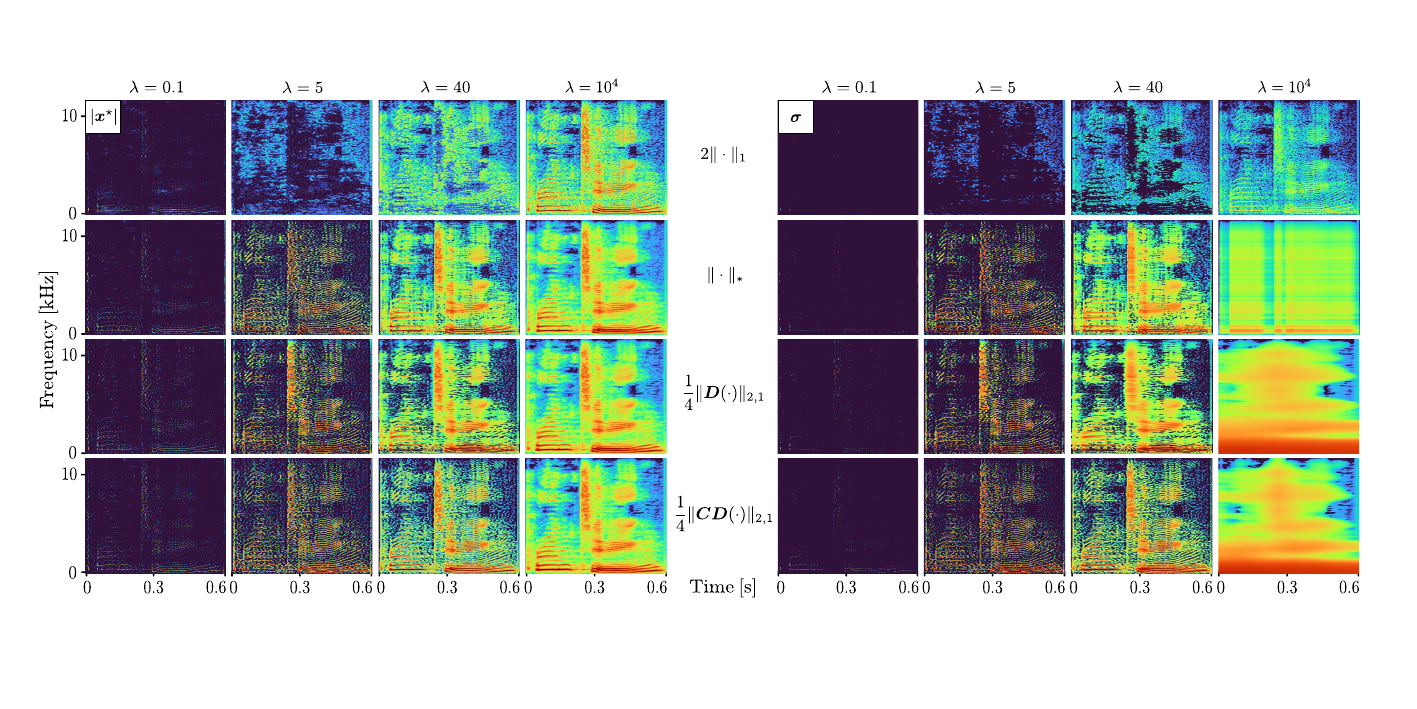}
    \vspace{-55pt}
    \caption{Obtained T-F representations $|\boldsymbol{x}^{\star\!}|$ of a speech signal (left) and corresponding auxiliary variables $\boldsymbol{\sigma}$ (right).
    All figures are illustrated by taking $20\log_{10}(\cdot)$, and the color range is 100 dB.
    Each row (from top to bottom) corresponds to the penalty $\psi(\boldsymbol{B}(\cdot)) = 2\|\cdot\|_1$, $\|\cdot\|_\ast$, $\frac{1}{4}\|\boldsymbol{D}(\cdot)\|_{2,1}$, and $\frac{1}{4}\|\boldsymbol{CD}(\cdot)\|_{2,1}$, respectively, where the coefficients are chosen for better visibility.
    Each column corresponds to different parameter $\lambda = 0.1, 5, 40, 10^4$.
    } 
    \label{fig:x_sigma_summary}
\end{figure*}

Owing to the separability of $\varphi$ in Eq.~\eqref{eq:sum_perspective}, the proximity operator of $\tau\varphi$ can be computed entry-wise \cite{SeparableSum},
\begin{align}\label{eq:prox_tau_varphi}
    \mathrm{prox}_{\tau\varphi}(\boldsymbol{x}, \boldsymbol{\sigma}) = (\mathrm{prox}_{\tau\phi}(x_k, \sigma_k))_{k=1}^{MN}.
\end{align}
The proximity operator for each entry is given as~\cite{perspective3}
\begin{align}\label{eq:prox_tau_phi}
    &\mathrm{prox}_{\tau\phi}(x_k, \sigma_k) \vspace{1mm}\nonumber\\
    &= \left\{\!\!
    \begin{array}{cl}
        (0,0) & (2\tau\sigma_k + |x_k|^2 \leq \tau^2),\\
        (0, \sigma_k - \frac{\tau}{2}) & (x_k\! = 0 \;\,\mathrm{and}\;\, 2\sigma_k\! > \tau),\\
        (x_k - \tau s \frac{x_k}{|x_k|}, \sigma_k + \tau\frac{s^2 - 1}{2}) & (\mathrm{otherwise}),
    \end{array}
    \right.
\end{align}
where $s > 0$ is the unique positive root of the cubic equation $s^3 + (\frac{2}{\tau}\sigma_k + 1)s - \frac{2}{\tau}|x_k| = 0$.
This cubic equation can be solved using Cardano's formula as follows:
\begin{align}\label{eq:s_formula}
    s = \left\{
    \begin{array}{ll}
        \sqrt[3]{-\frac{q}{2} + \sqrt{-r}} + \sqrt[3]{-\frac{q}{2} - \sqrt{-r}} & (r < 0),\\
        2\:\sqrt[3]{-\frac{q}{2}} & (r = 0),\\
        2\:\sqrt[3]{\sqrt{\frac{q^2}{4} + r}}\,\cos{\left(\frac{\arctan{(-2\sqrt{r}/q)}}{3}\right)} & (r > 0),
    \end{array}
    \right.
\end{align}
where $p=\frac{2}{\tau}\sigma_k + 1$, $q = -\frac{2}{\tau}|x_k|$, $r = -\frac{q^2}{4}-\frac{p^3}{27}$, and $\sqrt[3]{\cdot}$ is the real cubic root.
The projection onto $\mathcal{C}$ can be computed as
\begin{align}\label{eq:projection_c}
    P_\mathcal{C}(\boldsymbol{x}) &= \boldsymbol{x} - \boldsymbol{G}_{\gamma^\star}(\boldsymbol{G}_{\gamma^\star}^\mathrm{H}\boldsymbol{G}_{\gamma^\star})^{-1}(\boldsymbol{G}_{\gamma^\star}^\mathrm{H}\boldsymbol{x} - \boldsymbol{d})\nonumber\\
    &= \boldsymbol{x} - \boldsymbol{G}_{w}(\boldsymbol{G}_{\gamma^\star}^\mathrm{H}\boldsymbol{x} - \boldsymbol{d}).
\end{align}

The proximity operator of $(\lambda/\mu)\psi$ depends on the choice of the penalty function $\psi$.
The sequence $(\boldsymbol{x}^{[i]},\boldsymbol{\sigma}^{[i]})_{i\in\mathbb{N}}$ generated by the algorithm converges to a globally optimal solution of Problem \eqref{eq:proposed_reformulation} if the following conditions are satisfied \cite{relax_them_all}: $\tau\mu\|\boldsymbol{L}\|_\mathrm{op}^2 \leq 1$ and $\sum_{i \in \mathbb{N}} \rho^{[i]}(2 - \rho^{[i]}) = \infty$.

\begin{algorithm}[t]
\label{alg:prop}
\DontPrintSemicolon
  \KwInput{$\tau > 0,\: \mu > 0,\: \rho^{[i]} \in (0,2)\:(i=0, 1, 2, \dotsc)$,~and \mbox{$\boldsymbol{x}^{[0]} \in \mathbb{C}^{MN}, \boldsymbol{\sigma}^{[0]} \in \mathbb{R}^{MN}, \boldsymbol{u}^{[0]} \in \mathbb{C}^{MN}, \boldsymbol{v}^{[0]} \in \mathbb{C}^{J}$}}
  \For{$i=0, 1, 2, \dotsc$}
  {
    $\tilde{\boldsymbol{x}}^{[i+\frac{1}{2}]} = \boldsymbol{x}^{[i]} - \tau \boldsymbol{u}^{[i]}$\\
    \vspace{0.3mm}
    $\tilde{\boldsymbol{\sigma}}^{[i+\frac{1}{2}]} = \boldsymbol{\sigma}^{[i]} - \tau\boldsymbol{B}^\mathrm{H}\boldsymbol{v}^{[i]}$\\
    \vspace{0.3mm}
    $(\boldsymbol{x}^{[i+\frac{1}{2}]}, \boldsymbol{\sigma}^{[i+\frac{1}{2}]}) = \mathrm{prox}_{\tau\varphi}(\tilde{\boldsymbol{x}}^{[i+\frac{1}{2}]}, \tilde{\boldsymbol{\sigma}}^{[i+\frac{1}{2}]})$\\
    \vspace{0.3mm}
    $\tilde{\boldsymbol{u}}^{[i+\frac{1}{2}]} = \boldsymbol{u}^{[i]} + \mu(2\boldsymbol{x}^{[i+\frac{1}{2}]} - \boldsymbol{x}^{[i]})$\\
    \vspace{0.3mm}
    $\boldsymbol{u}^{[i+\frac{1}{2}]} = \tilde{\boldsymbol{u}}^{[i+\frac{1}{2}]} - \mu P_\mathcal{C}(\tilde{\boldsymbol{u}}^{[i+\frac{1}{2}]} / \mu)$\\
    \vspace{0.3mm}
    $\tilde{\boldsymbol{v}}^{[i+\frac{1}{2}]} = \boldsymbol{v}^{[i]} + \mu\boldsymbol{B}(2\boldsymbol{\sigma}^{[i+\frac{1}{2}]} - \boldsymbol{\sigma}^{[i]})$\\
    \vspace{0.3mm}
    $\boldsymbol{v}^{[i+\frac{1}{2}]} = \tilde{\boldsymbol{v}}^{[i+\frac{1}{2}]} - \mu\:\mathrm{prox}_{(\lambda/\mu)\psi}(\tilde{\boldsymbol{v}}^{[i+\frac{1}{2}]} / \mu)$\\
    \vspace{0.3mm}
    $\boldsymbol{x}^{[i+1]} = \boldsymbol{x}^{[i]} + \rho^{[i]}(\boldsymbol{x}^{[i+\frac{1}{2}]} - \boldsymbol{x}^{[i]})$\\
    \vspace{0.3mm}
    $\boldsymbol{\sigma}^{[i+1]} = \boldsymbol{\sigma}^{[i]} + \rho^{[i]}(\boldsymbol{\sigma}^{[i+\frac{1}{2}]} - \boldsymbol{\sigma}^{[i]})$\\
    \vspace{0.3mm}
    $\boldsymbol{u}^{[i+1]} = \boldsymbol{u}^{[i]} + \rho^{[i]}(\boldsymbol{u}^{[i+\frac{1}{2}]} - \boldsymbol{u}^{[i]})$\\
    \vspace{0.3mm}
    $\boldsymbol{v}^{[i+1]} = \boldsymbol{v}^{[i]} + \rho^{[i]}(\boldsymbol{v}^{[i+\frac{1}{2}]} - \boldsymbol{v}^{[i]})$\\
  }
\caption{Solver for the proposed framework \eqref{eq:proposed_reformulation}}
\end{algorithm}

\subsection{Some Notes on the Property of the Proposed Method}

Although $\lambda$ in Problem \eqref{eq:proposed_reformulation} changes strength of penalty on $\boldsymbol{\sigma}$, larger $\lambda$ does not always imply stronger induction towards the structure induced by $\psi\circ\boldsymbol{B}$.
This is because the equality constraint restricts the solution to be in the feasible set, but the induced structure may not fit into the constraint.
For instance, (i) and (ii) of Example \ref{example} induces $\boldsymbol{\sigma} = \boldsymbol{0}$ when $\lambda$ is huge, but $\boldsymbol{x}$ cannot be $\boldsymbol{0}$ due to the constraint.
In that case, the structure induced by $\psi\circ\boldsymbol{B}$ may disappear because $\boldsymbol{\sigma}$ becomes more like constant which cannot impose any structure via the weighted norm of $\varphi$.
When $\boldsymbol{\sigma}$ is fixed to positive numbers, say $\tilde{\boldsymbol{\sigma}}>\boldsymbol{0}$, Problem \eqref{eq:proposed_reformulation} reduces to $\min_{\boldsymbol{x} \in \mathbb{C}^{MN}} \boldsymbol{x}^\mathrm{H}\tilde{\boldsymbol{\Sigma}}{}^{-1}\boldsymbol{x} + \iota_\mathcal{C}(\boldsymbol{x})$ \ $(\tilde{\boldsymbol{\Sigma}} = \mathrm{diag}(\tilde{\boldsymbol{\sigma}}))$ whose solution is $\boldsymbol{x}_{\tilde{\boldsymbol{\sigma}}}^\star = \tilde{\boldsymbol{\Sigma}}\boldsymbol{G}_{\gamma^\star}(\boldsymbol{G}_{\gamma^\star}^\mathrm{H}\tilde{\boldsymbol{\Sigma}}\boldsymbol{G}_{\gamma^\star})^{-1}\boldsymbol{d}$, and hence it becomes the minimum norm solution $\boldsymbol{G}_w\boldsymbol{d}$ when $\boldsymbol{\sigma}$ is fixed to a positive constant $c\mathbf{1}$ $(c>0)$.
Therefore, exceedingly large $\lambda$ may result in a solution $\boldsymbol{x}^\star$ close to $\boldsymbol{G}_w\boldsymbol{d}$.
However, $\boldsymbol{x}^\star = \boldsymbol{G}_w\boldsymbol{d}$ does not occur because the second term of $\varphi$ (i.e., $\|\boldsymbol{\sigma}\|_1/2$) induces sparsity regardless of the choice of $\lambda$ and $\psi\circ\boldsymbol{B}$.

\section{Numerical Examples}\label{sec:exp}
To illustrate the property of the proposed framework, some examples are shown here.
A speech signal was analyzed using the Hann window $(L = 2^9)$ with hop size $a = 2^6$ and frequency bins $M=2^{12}$.
For convergence, Algorithm \ref{alg:prop} was iterated $5000$ times using $\tau = 1 / 2$, $\mu = 1 / 5$, $\rho^{[i]} = 1.99$, $\boldsymbol{\sigma}^{[0]} = |\boldsymbol{G}_w\boldsymbol{d}|$, $\boldsymbol{u}^{[0]} = \boldsymbol{0}$, and $\boldsymbol{v}^{[0]} = \boldsymbol{0}$.
For the penalty function $\psi\circ\boldsymbol{B}$, those listed in Example~\ref{example} were used.

\begin{figure}[!t]
    \centering
    \includegraphics[scale=0.45]{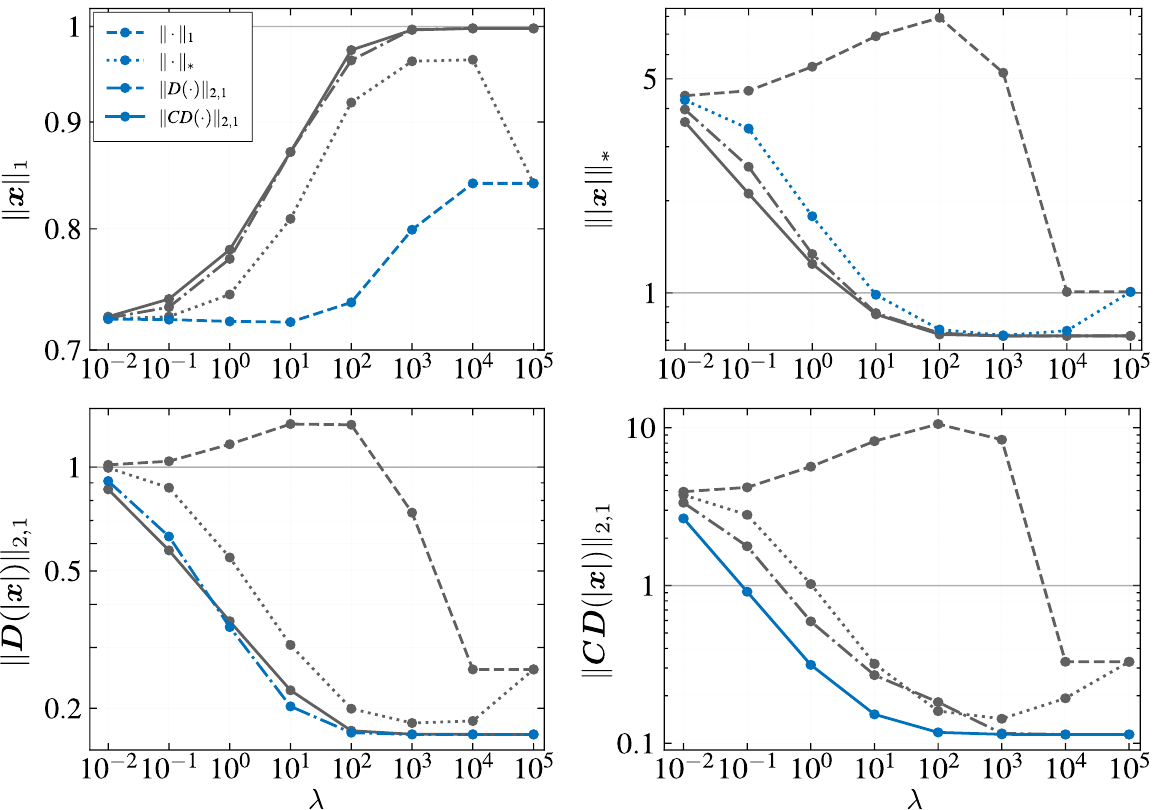}
    \vspace{-20pt}
    \caption{Values of $\psi(\boldsymbol{B}(|\boldsymbol{x}^{\star\!}|))/\psi(\boldsymbol{B}(|\boldsymbol{G}_w\boldsymbol{d}|))$.
    Each line corresponds to one of $\psi\circ\boldsymbol{B}$ in Example~\ref{example}.
    One of four lines is colored when $\psi\circ\boldsymbol{B}$ used for optimizing $\boldsymbol{\sigma}$ is the same as that used for the vertical axis.
    }
    \label{fig:penalty}
\end{figure}

Obtained T-F representations $|\boldsymbol{x}|$ and corresponding $\boldsymbol{\sigma}$ are shown in Fig.~\ref{fig:x_sigma_summary}.
Since $\|\boldsymbol{\sigma}\|_1$ is included in the definition of $\varphi(\boldsymbol{x},\boldsymbol{\sigma})$ in Eq.~\eqref{eq:sum_perspective}, $\lambda=0$ corresponds to basis pursuit in Eq.~\eqref{eq:basis_pursuit}.
By increasing $\lambda$, the structure induced by $\psi\circ\boldsymbol{B}$ is incorporated into the solution of basis pursuit.
However, due to the equality constraint, too large $\lambda$ enlarges difference between $|\boldsymbol{x}|$ and $\boldsymbol{\sigma}$, which distorts the effect of $\psi\circ\boldsymbol{B}$ on $|\boldsymbol{x}|$.
As in the figure, small and large $\lambda$ provide similar result, but some intermediate $\lambda$ gives distinctly different representations.

To quantitatively evaluate the difference, normalized penalty values for each representation were calculated as in Fig.~\ref{fig:penalty}.
As shown using the colored lines, by minimizing each penalty imposed on $\boldsymbol{\sigma}$, the same penalty was also minimized for $|\boldsymbol{x}|$.
Note that the norms $\|\cdot\|_1$ and $\|\cdot\|_\ast$, which induces $\boldsymbol{\sigma}=\boldsymbol{0}$ for huge $\lambda$, increased at the right end, whereas the seminorms $\|\boldsymbol{D}(\cdot)\|_{2,1}$ and $\|\boldsymbol{C}\boldsymbol{D}(\cdot)\|_{2,1}$, which induces constant $\boldsymbol{\sigma}$ for huge $\lambda$, stayed small.
Moreover, too large $\lambda$ seems to provide the same representation for different norms (or seminorms).
These results indicate that $\lambda$ interpolates between the solution of basis pursuit ($\lambda=0$) and some specific solution determined by the property of penalty function ($\lambda\to\infty$).

\begin{figure}[!t]
    \centering
    \includegraphics[scale=0.14]{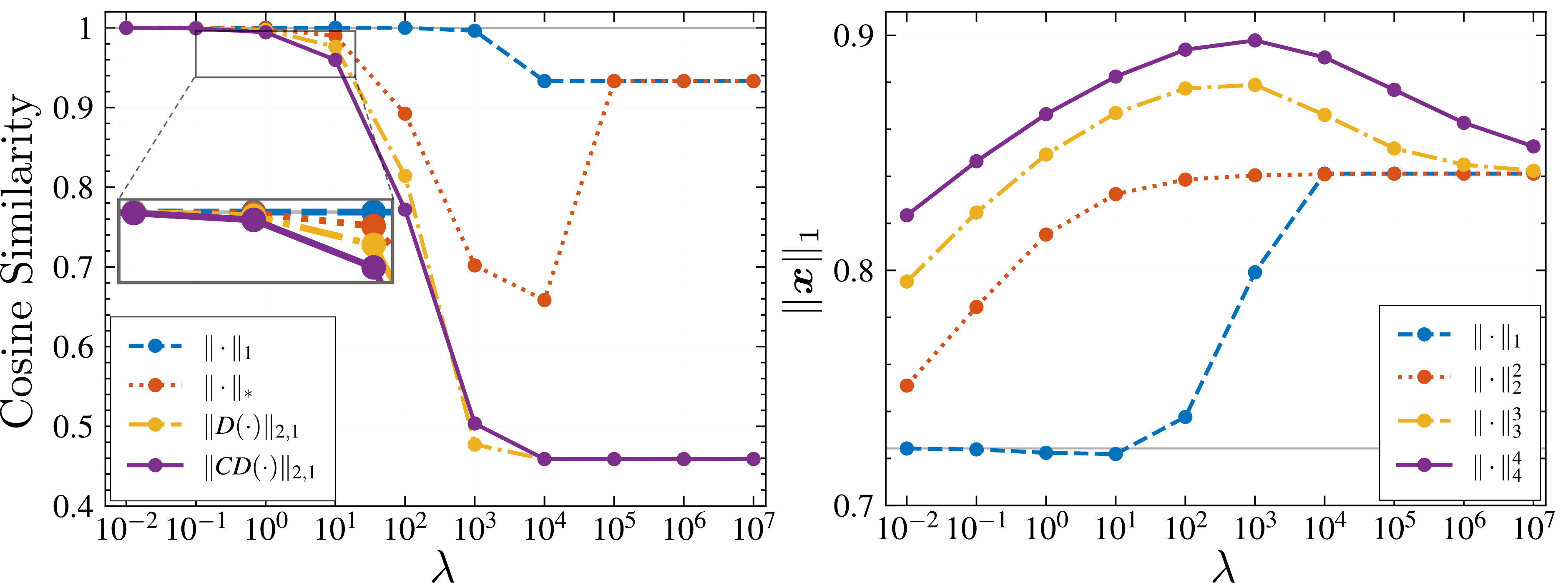}
    \vspace{-20pt}
    \caption{Cosine similarity between $|\boldsymbol{x}^{\star\!}|$ and $\boldsymbol{\sigma}$ (left), and $\|\boldsymbol{x}^{\star\!}\|_1/\|\boldsymbol{G}_w\boldsymbol{d}\|_1$ for $\boldsymbol{x}^{\star\!}$ obtained by $\psi = \|\cdot\|_p^p\:(p=1, 2, 3, 4)$ with $\boldsymbol{B} = \boldsymbol{I}$ (right).
    The solid horizontal line in the right figure indicates that obtained by $\lambda = 0$.}
    \label{fig:sim_sparse}
\end{figure}

Finally, we measured similarity between $|\boldsymbol{x}|$ and $\boldsymbol{\sigma}$ (left) and sparsity of $|\boldsymbol{x}|$ (right) as in Fig.~\ref{fig:sim_sparse}.
As in the left figure, $|\boldsymbol{x}|$ and $\boldsymbol{\sigma}$ are similar for small $\lambda$ but becomes different as $\lambda$ increases.
For larger $\lambda$, similarity converged to some values that are determined by the property of the penalty functions.
From the right figure, it can be seen that the obtained T-F representations were sparser than the DGT coefficient $\boldsymbol{G}_w\boldsymbol{d}$ (i.e., the normalized $\ell_1$-norm was less than $1$) even when the penalty function induces anti-sparsity ($p=3,4$).
Moreover, the starting point $(\lambda=0)$ and the end point $(\lambda\to\infty)$ seems the same for all norms.
Further investigation of these interesting properties of the proposed method is left as the future works.

\section{Conclusion}
In this paper, a convex optimization-based framework was proposed for realizing sparse T-F representations whose magnitude has properties specified by the user.
Some T-F representations that have not been realized before were provided to show the property of the proposed framework.
Future works include further investigation of the property of the proposed framework as well as the possible range of modification of the T-F representations.
Considering a denoising formulation by relaxing the equality constraint to inequality can be an interesting direction for extending the range of application.

\ifCLASSOPTIONcaptionsoff
  \newpage
\fi

\bibliographystyle{IEEEtran}
\bibliography{IEEEabrv,Bibliography}

\end{document}